\documentclass[11pt]{amsart}

\usepackage[a4paper,margin=1.15in]{geometry}
\usepackage{amsmath,amssymb,amsthm,mathtools}
\usepackage{microtype}
\usepackage{tikz}
\usetikzlibrary{arrows.meta}
\usepackage{xcolor}
\definecolor{linkblue}{RGB}{0,0,180}
\usepackage[colorlinks=true,linkcolor=linkblue,citecolor=linkblue,urlcolor=linkblue]{hyperref}
\usepackage[capitalize,noabbrev]{cleveref}

\newtheorem{theorem}{Theorem}
\newtheorem{lemma}{Lemma}

\newtheorem{remark}{Remark}

\crefname{theorem}{Theorem}{Theorems}
\crefname{lemma}{Lemma}{Lemmas}
\crefname{proposition}{Proposition}{Propositions}
\crefname{remark}{Remark}{Remarks}
\crefname{figure}{Figure}{Figures}
\crefname{equation}{Equation}{Equations}

\newcommand{\ip}[2]{\langle #1,#2\rangle}
\newcommand{\bbE}{\mathbb E}

\title{A note on vector trifferent codes over the Sphere}
\author{Stefano Della Fiore}
\address{DII, Universit\`a degli Studi di Brescia, Via Branze 38, 25123 Brescia, Italy}
\email{stefano.dellafiore@unibs.it}
\date{}

\begin{document}

\begin{abstract}
Let $S^2\subset \mathbb R^3$ be the unit sphere.  A set
$C\subset (S^2)^n$ is called vector trifferent if for every three distinct
$x,y,z\in C$ there is a coordinate $i$ for which $x_i,y_i,z_i$ are mutually
orthogonal.  Bhandari and Khetan recently introduced this vectorial analogue
of trifferent codes and proved the upper bound
$|C|\le (\sqrt 2+o(1))(3/2)^n$.  We improve the bound to:
\[
        |C|\le (1+o(1))\left(\frac32\right)^n .
\]
The new ingredient in the proof is a local
packing inequality obtained by two-coloring, around each codeword, the circles
of vectors orthogonal to the corresponding coordinates.  This replaces a
global tensor-space bound by a centered estimate and gives exactly the missing
factor in the leading constant.
\end{abstract}

\maketitle

\section{Introduction}

Classical trifferent codes were introduced by Elias in connection with
zero-error information theory and list decoding \cite{Elias1988}.  A code
$C\subset \{0,1,2\}^n$ is trifferent if for every three distinct codewords
$x,y,z\in C$ there is a coordinate $i$ at which the three symbols
$x_i,y_i,z_i$ are all distinct.  Determining the largest possible size of such
codes is a long-standing problem.  Elias's pruning argument gives
$|C|\le 2(3/2)^n$ \cite{Elias1988}; subsequent work has produced several
improvements in the discrete setting, including computer-assisted constant
improvements for small block lengths and polynomial improvements in $n$
\cite{DellaFioreGnuttiPolak2022,Kurz2024,BhandariKhetanDiscrete2025}.
The related theory of separating systems and perfect hashing goes back at
least to Fredman and Komlos \cite{FredmanKomlos1984}.

Bhandari and Khetan recently proposed a geometric analogue, called vector
trifference~\cite{BhandariKhetanVector2025}.  In this setting the alphabet is
not the three-symbol set $\{0,1,2\}$ but the unit sphere
\[
        S^2=\{u\in \mathbb R^3:\|u\|=1\},
\]
and three symbols are separated when the corresponding vectors are mutually
orthogonal.  Thus $C\subset (S^2)^n$ is \emph{vector trifferent} if for every
three pairwise distinct $x,y,z\in C$ there exists $i\in\{1,\ldots,n\}$ such
that $x_i,y_i,z_i$ are mutually orthogonal in $\mathbb R^3$.

This continuous formulation is not a cosmetic variant of the discrete one.  A
direct reduction to ternary trifference would require a three-coloring of the
orthogonality graph on $S^2$, while this graph has chromatic number four
\cite{GodsilZaks2012}.  This obstruction is one reason why the vector problem
requires methods different from the classical pruning argument.  Bhandari and
Khetan proved the first analogue of the Elias exponential bound in the vector
setting:
\[
        |C|\le (\sqrt 2+o(1))\left(\frac32\right)^n.
\]
Their proof uses a direct-sum construction of tensor-product subspaces
associated with unordered pairs of codewords \cite{BhandariKhetanVector2025}.

The purpose of this note is to improve the leading constant.

\begin{theorem}\label{thm:main}
If $C\subset (S^2)^n$ is vector trifferent, then
\[
        |C|\le (1+o(1))\left(\frac32\right)^n
\]
as $n\to\infty$.
\end{theorem}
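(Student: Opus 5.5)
The plan is to replace the global tensor-space count of Bhandari and Khetan by a bound that is \emph{centered} at one codeword, and then feed it into an Elias-type averaging argument. Fix a codeword $x\in C$. In each coordinate $i$, the vectors orthogonal to $x_i$ form a great circle $\Gamma_i=x_i^{\perp}\cap S^2$. On $\Gamma_i$ the orthogonality graph is bipartite: writing points of $\Gamma_i$ by an angle $\theta\in[0,2\pi)$, orthogonal points differ by $\pi/2$ modulo $\pi$. So the coloring $\chi_i(\theta)=0$ if $\theta \bmod \pi\in[0,\pi/2)$ and $\chi_i(\theta)=1$ otherwise is proper for orthogonality on $\Gamma_i$.

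For another codeword $y$, the coordinate $y_i$ is either parallel to $x_i$ ($y_i=\pm x_i$) or not. In the latter case I project $y_i$ to $\Gamma_i$, i.e.\ normalize its component orthogonal to $x_i$, and record the color of the projection. This yields a word $w_x(y)\in\{0,1,*\}^n$, with $*$ at the coordinates $P_x(y)=\{i: y_i\parallel x_i\}$.

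The key observation is the following. If $y,z\neq x$ are distinct, trifference of $\{x,y,z\}$ gives a coordinate $i$ with $x_i,y_i,z_i$ orthonormal. There neither $y_i$ nor $z_i$ is parallel to $x_i$, both already lie on $\Gamma_i$, and they are orthogonal, so they receive different colors. Hence the binary cylinders $\{b\in\{0,1\}^n : b \text{ agrees with } w_x(y) \text{ off } P_x(y)\}$ are pairwise disjoint as $y$ ranges over $C\setminus\{x\}$. This gives the \textbf{local packing inequality}
\[
  \sum_{y\in C\setminus\{x\}} 2^{|P_x(y)|}\le 2^n .
\]
It is the vector analogue of the inequality that drives Elias's argument, and it is where the factor $\sqrt2$ should disappear, because it is exact per codeword rather than a dimension count over unordered pairs.

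The second step is to convert the local inequalities into the global bound $|C|\le(1+o(1))(3/2)^n$ by averaging over centers and over a random restriction of coordinates, following the Elias pruning scheme. Concretely, I would fix a subset $T$ of coordinates of size about $n/3$ and group codewords by their restriction to $T$ up to the ``parallel or not'' relation. Within a group the coordinates in $T$ behave like $*$'s, and I would apply the local inequality on the complementary $2n/3$ coordinates with the $*$ weights. Summing over centers $x$ and using the symmetry $i\in P_x(y)\iff i\in P_y(x)$ then gives a quadratic inequality in $|C|$. The parameters should be chosen so that the exponent is $\log_2(3/2)$ per coordinate and the leading constant is $1$. The $o(1)$ comes from passing to a large subcode where all pairs have typical parallel-pattern statistics, and from Stirling estimates.

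The main obstacle is this averaging step. The local inequality penalizes parallel coordinates, since it gives the weight $2^{|P|}$ on the left, so codes with many parallel coincidences are handled automatically. However, the bound must remain sharp when $P_x(y)$ is typically empty, where the inequality alone only gives $2^n$. To handle this, I would first try to combine the centered inequality at $x$ with a pruning over a coordinate $i$ on which $x_i$ takes a value shared by few codewords: the spherical analogue of ``delete the least popular symbol''. Here ``symbol'' is replaced by the line $\mathbb{R}x_i$, and I would check that the bookkeeping reproduces exactly the ratio $3/2$ per coordinate.
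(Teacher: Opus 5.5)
There is a genuine gap, and it is in both steps.

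\textbf{The local packing step.} Your inequality is valid, but it discards exactly the information the argument needs. Disjointness of the cylinders for $y\neq z$ only ever uses a coordinate where $x_i,y_i,z_i$ are mutually orthogonal, and there $y_i,z_i$ already lie on $\Gamma_i$. Nothing is gained by projecting and fixing a bit at coordinates where $y_i$ is merely non-parallel to $x_i$. Doing so shrinks the cylinder of $y$ from $2^{n-d(x,y)}$, where $d(x,y)=|\{i:x_i\perp y_i\}|$, to $2^{|P_x(y)|}$. Since $P_x(y)\subseteq\{i:y_i\not\perp x_i\}$, your bound $\sum_{y\neq x}2^{|P_x(y)|}\le 2^n$ is weaker than the one obtained by leaving free every coordinate with $y_i\not\perp x_i$. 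That stronger bound is $\sum_{y\ne x}2^{-d(x,y)}\le 1$, the paper's \cref{lem:boxes}. As you observe yourself, your version degenerates to $|C|\le 2^n+1$ when no coordinates are parallel, because it cannot tell orthogonal coordinates from merely non-parallel ones. It also cannot be fed into a positive-kernel argument. For $K(x,y)=\prod_i\frac{1+\ip{x_i}{y_i}^2}{2}$, a coordinate with $y_i$ close to but not equal to $\pm x_i$ contributes a factor near $1$, so $K(x,y)\le 2^{|P_x(y)|-n}$ fails, whereas $K(x,y)\le 2^{-d(x,y)}$ holds.

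\textbf{The averaging step.} This is the more serious gap: it is where the rate $3/2$ must come from, and you leave it as a plan. The centered inequality gives only an upper bound, $\sum_{x,y\in C}2^{-d(x,y)}\le 2|C|$. You need a matching lower bound on a pair average in terms of $|C|$ and $n$, and nothing in your sketch produces one. The proposed Elias-type pruning does not transfer. Elias keeps $2/3$ of the code while reducing a coordinate to two symbols, which then cannot separate any triple. On the sphere a coordinate may use arbitrarily many lines, so deleting the least popular line can remove a negligible fraction and leaves the coordinate still able to separate triples. Reaching the Elias ratio would need something like a three-colouring of the orthogonality graph, whose chromatic number is $4$. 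The grouping over $T$ and the resulting ``quadratic inequality'' are never actually derived.

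\textbf{What the paper uses instead.} It uses a dimension argument. Expand $K(x,y)=2^{-n}\sum_{S\subseteq[n]}\ip{x_S}{y_S}^2$, with $x_S=\bigotimes_{i\in S}x_i$ a unit vector in a space of dimension $3^{|S|}$. The Gram-matrix Welch bound then gives $\bbE[K(X,Y)]\ge 2^{-n}\sum_s\binom ns\max\{3^{-s},M^{-1}\}$. Set against $\bbE[K(X,Y)]\le 2/M$ from the centered lemma, this first gives $M\le 2(3/2)^n$. Splitting the sum at $s\approx\log_3M$ then gives $2/M\ge(1-o(1))(2/3)^n+(1-o(1))M^{-1}$, i.e.\ $M\le(1+o(1))(3/2)^n$.
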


The proof is based on the following idea.  For $x,y\in C$, put
\[
        d(x,y)=|\{i:x_i\perp y_i\}|,
        \qquad
        A(x,y)=n-d(x,y).
\]
The tensor-space method of \cite{BhandariKhetanVector2025} gives, in this
notation, a global estimate of the form
\[
        \sum_{\{x,y\}\subset C}2^{A(x,y)}\le 3^n.
\]
Combined with the standard binomial lower bound on the same average, this
naturally leads to the constant $\sqrt 2$.  Our replacement is the stronger
centered estimate
\[
        \sum_{y\in C,\,y\ne x}2^{A(x,y)}\le 2^n
        \qquad\text{for every }x\in C.
\]
After summing over $x$, this gives
\[
        \sum_{\{x,y\}\subset C}2^{A(x,y)}\le |C|2^{n-1},
\]
which is exactly the saving needed at the level of constants.  The proof of
this centered estimate is elementary: for a fixed center $x$, each coordinate
contributes a circle $S^2\cap x_i^\perp$, and a two-coloring of this circle
turns the vector trifference condition into disjointness of binary boxes.

We shall combine this packing estimate with a positive-kernel argument.  The
kernel step is close in spirit to the moment lower bounds appearing in
\cite{BhandariKhetanVector2025}, but is formulated here through the elementary
Gram-matrix form of the Welch bound.

\section{A centered disjointness inequality}

The first ingredient is a centered packing inequality.

\begin{lemma}\label{lem:boxes}
Let $C\subset (S^2)^n$ be finite and vector trifferent.  Then, for every
$x\in C$,
\[
        \sum_{\substack{y\in C\\ y\ne x}} 2^{-d(x,y)}\le 1.
\]
\end{lemma}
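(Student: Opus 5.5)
Fix $x\in C$. The plan is to attach to each $y\in C\setminus\{x\}$ a subcube (``box'') $B_y\subseteq\{0,1\}^n$ of size $2^{n-d(x,y)}$, and to show that these boxes are pairwise disjoint. Then
\[
\sum_{y\ne x}2^{n-d(x,y)}=\sum_{y\ne x}|B_y|\le 2^n,
\]
which is exactly the claimed inequality after dividing by $2^n$.

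To build the boxes, for each coordinate $i$ consider the great circle $\Gamma_i=S^2\cap x_i^\perp$. We choose a two-colouring $c_i:\Gamma_i\to\{0,1\}$ in which any two orthogonal vectors of $\Gamma_i$ receive different colours. Concretely, fix an orthonormal basis $e,f$ of $x_i^\perp$ and write $u=\cos\theta\, e+\sin\theta\, f$. Put $c_i(u)=0$ if $\theta\bmod \pi\in[0,\pi/2)$ and $c_i(u)=1$ otherwise. Two unit vectors in the plane $x_i^\perp$ are orthogonal exactly when their angles differ by $\pm\pi/2$ modulo $2\pi$. This shifts $\theta\bmod\pi$ from $[0,\pi/2)$ to $[\pi/2,\pi)$ or back, so the colours differ, as required.

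Now define
\[
B_y=\{b\in\{0,1\}^n:\ b_i=c_i(y_i)\text{ whenever } y_i\perp x_i\}.
\]
The condition $y_i\perp x_i$ means $y_i\in\Gamma_i$, so $c_i(y_i)$ is defined. The box fixes $d(x,y)$ coordinates and leaves the others free, so $|B_y|=2^{n-d(x,y)}$.

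For disjointness, take distinct $y,z\in C\setminus\{x\}$. Then $x,y,z$ are three distinct codewords, so by vector trifference some coordinate $i$ has $x_i,y_i,z_i$ mutually orthogonal. At this coordinate both $y_i$ and $z_i$ lie on $\Gamma_i$, so coordinate $i$ is fixed in both $B_y$ and $B_z$. Moreover $y_i\perp z_i$, so $c_i(y_i)\ne c_i(z_i)$. Hence no $b$ lies in both boxes, i.e.\ $B_y\cap B_z=\emptyset$, which completes the argument.

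The only genuinely delicate point is the existence of the proper two-colouring of the orthogonality relation restricted to a great circle. This works because orthogonality on a circle is just rotation by a quarter turn, in contrast with the whole sphere, where four colours are needed \cite{GodsilZaks2012}. The half-open intervals above handle the boundary cases.
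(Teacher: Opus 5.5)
Your proposal is correct and matches the paper's proof: your colouring rule ($\theta \bmod \pi\in[0,\pi/2)$ gives colour $0$) is exactly the paper's alternating colouring of the four half-open quarter-arcs of $\Gamma_i=S^2\cap x_i^\perp$. The boxes $B_y$ of size $2^{n-d(x,y)}$ and the disjointness argument through a coordinate where $x_i,y_i,z_i$ are mutually orthogonal are also the same as in the paper.
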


\begin{proof}
Fix $x\in C$.  For each coordinate $i$, the vectors orthogonal to $x_i$ form a
circle
\[
        \Gamma_i=S^2\cap x_i^\perp .
\]
Choose an arbitrary angular parametrization of $\Gamma_i$ by
$\theta\in \mathbb R/2\pi\mathbb Z$, and color the circle with two colors by
alternating the colors on the four half-open arcs
\[
[0,\pi/2),\quad [\pi/2,\pi),\quad [\pi,3\pi/2),\quad [3\pi/2,2\pi).
\]
Thus rotating a point of $\Gamma_i$ by an angle $\pi/2$ always changes its
color; see \cref{fig:circle-coloring}.  Consequently, any two
orthogonal vectors in $\Gamma_i$ have opposite colors.

For each $y\in C\setminus\{x\}$, define a subset $B_y\subset\{0,1\}^n$ as
follows.  In coordinate $i$, impose one binary condition if $y_i\perp x_i$,
namely that the $i$-th bit is the color of $y_i$ on $\Gamma_i$; impose no
condition if $y_i\not\perp x_i$.  Hence
\[
        |B_y|=2^{n-d(x,y)},
\]
so the normalized size of $B_y$ inside $\{0,1\}^n$ is $2^{-d(x,y)}$.

We claim that the sets $B_y$, $y\in C\setminus\{x\}$, are pairwise disjoint.
Take distinct $y,z\in C\setminus\{x\}$.  Since $C$ is vector trifferent, there
is a coordinate $i$ such that $x_i,y_i,z_i$ are mutually orthogonal.  Then
$y_i,z_i\in \Gamma_i$, and they are orthogonal inside the plane $x_i^\perp$.
By the choice of the coloring, they have opposite colors.  Therefore $B_y$ and
$B_z$ impose opposite values on the $i$-th bit, and so
$B_y\cap B_z=\varnothing$.

Since the sets $B_y$, $y\in C\setminus\{x\}$, are pairwise disjoint and
all lie in $\{0,1\}^n$, we have
\[
        \left|\bigcup_{\substack{y\in C\\y\ne x}} B_y\right|
        =
        \sum_{\substack{y\in C\\y\ne x}} |B_y|
        \le
        |\{0,1\}^n|
        =2^n.
\]
Using $|B_y|=2^{n-d(x,y)}$, this becomes
\[
        \sum_{\substack{y\in C\\y\ne x}}2^{n-d(x,y)}\le 2^n.
\]
Dividing by $2^n$ gives
\[
        \sum_{\substack{y\in C\\y\ne x}}2^{-d(x,y)}\le 1,
\]
as required.
\end{proof}

\begin{figure}[t]
\centering
\begin{tikzpicture}[scale=1.05]
    \def\R{2.05}
    \foreach \a/\b/\shade in {0/90/gray!12,90/180/gray!42,180/270/gray!12,270/360/gray!42} {
        \fill[\shade] (0,0) -- (\a:\R) arc[start angle=\a,end angle=\b,radius=\R] -- cycle;
    }
    \draw[thick] (0,0) circle (\R);
    \foreach \a in {0,90,180,270} {
        \draw[thin] (0,0) -- (\a:\R);
    }
    \node at (45:1.25) {$0$};
    \node at (135:1.25) {$1$};
    \node at (225:1.25) {$0$};
    \node at (315:1.25) {$1$};

    \coordinate (p) at (35:\R);
    \coordinate (q) at (125:\R);
    \fill (p) circle (1.4pt);
    \fill (q) circle (1.4pt);
    \draw[-{Stealth[length=2.4mm]},thick] (35:0.72) arc[start angle=35,end angle=125,radius=0.72];
    \node at (80:0.98) {$\pi/2$};
    \node[anchor=south west] at (p) {$u$};
    \node[anchor=south east] at (q) {$v$};

    \node[align=center] at (0,-2.65) {orthogonal points differ by a rotation of $\pi/2$\\and therefore receive opposite colors};
\end{tikzpicture}
\caption{The two-coloring used on each circle $\Gamma_i=S^2\cap x_i^\perp$.
The four half-open arcs of length $\pi/2$ are colored alternately.  Hence a
quarter-turn sends every point to the opposite color, which is exactly the
property needed in the disjointness argument.}
\label{fig:circle-coloring}
\end{figure}

\begin{remark}\label{rem:centered-saving}
Since $A(x,y)=n-d(x,y)$, \cref{lem:boxes} is equivalent to
\[
        \sum_{\substack{y\in C\\y\ne x}}2^{A(x,y)}\le 2^n
        \qquad\text{for every }x\in C.
\]
Summing this inequality over $x\in C$ gives
\[
        \sum_{\{x,y\}\subset C}2^{A(x,y)}\le |C|2^{n-1}.
\]
This is the local substitute, in the continuous setting, for the right-degree
counting estimate available in the discrete ternary problem.  It is also the
step that improves the leading constant from $\sqrt 2$ to $1$.
\end{remark}

\section{A positive-kernel estimate}
\label{sec:positive-kernel}

For $u,v\in S^2$, define
\[
        \kappa(u,v):=\frac{1+\ip{u}{v}^2}{2}.
\]
For $x,y\in (S^2)^n$, set
\[
        K(x,y):=\prod_{i=1}^n \kappa(x_i,y_i).
\]
The next two lemmas give matching lower and upper estimates for the average of
$K$ over the code.  The lower estimate uses only a standard Gram-matrix form of
the Welch bound.

\begin{lemma}[Welch bound in average form]
\label{lem:welch}
Let $v_1,\ldots,v_M$ be unit vectors in a real inner product space of dimension
$D$.  If $I,J$ are independent uniform random elements of $[M]$, then
\[
        \bbE \left[ \ip{v_I}{v_J}^2 \right]
        \ge
        \max\left\{\frac1D,\frac1M\right\}.
\]
\end{lemma}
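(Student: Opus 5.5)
We prove the two lower bounds separately, since $\bbE[\ip{v_I}{v_J}^2]$ is at least each of them. Write
\[
        \bbE\left[\ip{v_I}{v_J}^2\right]=\frac1{M^2}\sum_{a,b=1}^M\ip{v_a}{v_b}^2 .
\]

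\emph{The bound $1/M$.} Keep only the diagonal terms $a=b$ and drop the off-diagonal ones, which are nonnegative. Since $\ip{v_a}{v_a}=1$ for unit vectors, the diagonal contributes exactly $M$, so the sum is at least $M$ and the average is at least $M/M^2=1/M$.

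\emph{The bound $1/D$.} We express the sum as a Frobenius norm. Identify the space with $\mathbb R^D$ and form the frame operator
\[
        F=\sum_{a=1}^M v_a v_a^{\top},
\]
which is a symmetric positive semidefinite $D\times D$ matrix. Its trace is
\[
        \operatorname{tr}F=\sum_a\|v_a\|^2=M,
\]
and its Frobenius norm satisfies
\[
        \operatorname{tr}(F^2)=\sum_{a,b}\ip{v_a}{v_b}^2 .
\]
Let $\lambda_1,\ldots,\lambda_D$ be the eigenvalues of $F$. By Cauchy--Schwarz,
\[
        \operatorname{tr}(F^2)=\sum_j\lambda_j^2\ge\frac{1}{D}\Bigl(\sum_j\lambda_j\Bigr)^2=\frac{M^2}{D}.
\]
Dividing by $M^2$ gives an average of at least $1/D$.

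Equivalently, one could work with the $M\times M$ Gram matrix $G=(\ip{v_a}{v_b})$, which has rank at most $D$ and trace $M$, and apply Cauchy--Schwarz to its at most $D$ nonzero eigenvalues.

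Neither step is a real obstacle. The only care needed is that the space may be infinite-dimensional while $D$ is finite; restricting to the span of the $v_a$, whose dimension is at most $D$, removes any issue. Taking the maximum of the two bounds gives the statement.
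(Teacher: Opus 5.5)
Your proof is correct and follows the paper's argument: the diagonal terms give $1/M$, and Cauchy--Schwarz on the spectrum gives $1/D$. The only difference is that the paper works with the $M\times M$ Gram matrix $G$, using $\operatorname{rank}(G)\le D$, while your main route uses the $D\times D$ frame operator $F$; since $F$ and $G$ have the same nonzero eigenvalues, the two are the same argument, as your closing remark already notes.
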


\begin{proof}
The diagonal terms alone give
\[
        \bbE \left[ \ip{v_I}{v_J}^2 \right]
        =
        \frac1{M^2}\sum_{i,j=1}^M \ip{v_i}{v_j}^2
        \ge
        \frac1{M^2}\sum_{i=1}^M \ip{v_i}{v_i}^2
        =
        \frac1M.
\]

For the dimension bound, let $G$ be the $M\times M$ Gram matrix
$G_{ij}=\ip{v_i}{v_j}$.  Then $G$ is positive semidefinite,
$\operatorname{rank}(G)\le D$, and $\operatorname{Tr}(G)=M$.  If
$\lambda_1,\ldots,\lambda_r$ are the nonzero eigenvalues of $G$, then
$r\le D$ and
\[
        \sum_{i=1}^r \lambda_i=M.
\]
By Cauchy-Schwarz inequality,
\[
        \sum_{i=1}^r \lambda_i^2
        \ge
        \frac{M^2}{r}
        \ge
        \frac{M^2}{D}.
\]
On the other hand,
\[
        \sum_{i=1}^r \lambda_i^2
        =
        \operatorname{Tr}(G^2)
        =
        \sum_{i,j=1}^M \ip{v_i}{v_j}^2,
\]
because the eigenvalues of $G^2$ are the squares of the eigenvalues of $G$, and
because $G$ is symmetric.  Dividing by $M^2$ gives
\[
        \bbE \left[ \ip{v_I}{v_J}^2  \right] \ge \frac1D.
\]
Together with the diagonal estimate, this proves the lemma.
\end{proof}

\begin{lemma}\label{lem:kernel-lower}
Let $C\subset (S^2)^n$ be finite, let $M=|C|$, and let $X,Y$ be independent
uniform random elements of $C$.  Then
\[
        \bbE \left[ K(X,Y) \right]
        \ge
        2^{-n}
        \sum_{s=0}^n
        \binom ns
        \max\{3^{-s},M^{-1}\}.
\]
\end{lemma}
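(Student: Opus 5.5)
Expanding the product defining $K$ coordinatewise gives
\[
        K(x,y)=2^{-n}\prod_{i=1}^n\bigl(1+\ip{x_i}{y_i}^2\bigr)
        =2^{-n}\sum_{S\subseteq[n]}\prod_{i\in S}\ip{x_i}{y_i}^2 .
\]
Taking expectations over independent uniform $X,Y\in C$ and using linearity, it suffices to prove that for every subset $S\subseteq[n]$ with $|S|=s$,
\[
        \bbE\Bigl[\prod_{i\in S}\ip{X_i}{Y_i}^2\Bigr]\ge \max\{3^{-s},M^{-1}\}.
\]
Summing over the $\binom ns$ subsets of each size $s$ then gives exactly the claimed bound.

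To prove the displayed inequality for a fixed $S$, we realize the product as a single squared inner product in a tensor space. For each $x\in C$, put $v_x:=\bigotimes_{i\in S}x_i\in(\mathbb R^3)^{\otimes S}$. This space has dimension $D=3^s$. Since the inner product on a tensor product is multiplicative on pure tensors, we have $\ip{v_x}{v_y}=\prod_{i\in S}\ip{x_i}{y_i}$. In particular each $v_x$ is a unit vector, and
\[
        \ip{v_x}{v_y}^2=\prod_{i\in S}\ip{x_i}{y_i}^2 .
\]
We then apply \cref{lem:welch} to the family $(v_x)_{x\in C}$ of $M$ unit vectors, with $I,J$ corresponding to $X,Y$. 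This yields $\bbE[\ip{v_X}{v_Y}^2]\ge\max\{3^{-s},1/M\}$.

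The one point needing care is that distinct codewords may give equal or parallel tensors $v_x$. \cref{lem:welch} is stated for a list of unit vectors indexed by $[M]$, not for a set, so repetitions are harmless: the Gram-matrix argument and the diagonal estimate never use distinctness. The case $s=0$ is trivial, since the empty tensor product is the scalar $1$ and both sides equal $1$. There is no genuine obstacle; the key step is recognizing the tensor embedding, after which the result follows from \cref{lem:welch} directly.
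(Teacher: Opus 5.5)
Your proposal is correct and follows essentially the same route as the paper: you expand $K$ over subsets $S\subseteq[n]$, embed each codeword as the unit pure tensor $\bigotimes_{i\in S}x_i$ in a space of dimension $3^{|S|}$, apply \cref{lem:welch} to each term, and then group the terms by $|S|$. Your remark that \cref{lem:welch} is applied to an indexed list, so coinciding tensors cause no trouble, is a point of care that the paper's set notation $\{x_S:x\in C\}$ glosses over.
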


\begin{proof}
Expanding the product defining $K$, we get
\[
\begin{aligned}
        K(X,Y)
        &=
        \prod_{i=1}^n \frac{1+\ip{X_i}{Y_i}^2}{2}  \\
        &=
        2^{-n}
        \sum_{S\subseteq[n]}
        \prod_{i\in S}\ip{X_i}{Y_i}^2 .
\end{aligned}
\]
Taking expectation,
\[
        \bbE \left[ K(X,Y) \right]
        =
        2^{-n}
        \sum_{S\subseteq[n]}
        \bbE \left[
        \prod_{i\in S}\ip{X_i}{Y_i}^2 \right] .
\]

Fix $S\subseteq[n]$.  For $x\in C$, write
\[
        x_S:=\bigotimes_{i\in S}x_i
        \in
        \bigotimes_{i\in S}\mathbb R^3,
\]
with the convention $x_\varnothing=1$.  The tensor space
$\bigotimes_{i\in S}\mathbb R^3$ has dimension $3^{|S|}$, and each $x_S$ is a
unit vector.  Moreover,
\[
        \ip{x_S}{y_S}
        =
        \prod_{i\in S}\ip{x_i}{y_i},
\]
and hence
\[
        \ip{x_S}{y_S}^2
        =
        \prod_{i\in S}\ip{x_i}{y_i}^2.
\]

Apply \cref{lem:welch} to the $M$ unit vectors
\[
        \{x_S:x\in C\}
        \subseteq
        \bigotimes_{i\in S}\mathbb R^3.
\]
They lie in dimension $3^{|S|}$.  Therefore
\[
        \bbE \left[
        \prod_{i\in S}\ip{X_i}{Y_i}^2 \right]
        =
        \bbE \left[ \ip{X_S}{Y_S}^2 \right]
        \ge
        \max\{3^{-|S|},M^{-1}\}.
\]
Substituting this estimate into the expansion of $\bbE \left[ K(X,Y) \right]$ gives
\[
        \bbE \left[ K(X,Y) \right]
        \ge
        2^{-n}
        \sum_{S\subseteq[n]}
        \max\{3^{-|S|},M^{-1}\}.
\]
Grouping subsets according to their cardinality $s=|S|$ proves the claim.
\end{proof}

\begin{lemma}\label{lem:kernel-upper}
Let $C\subset (S^2)^n$ be finite and vector trifferent, let $M=|C|$, and let
$X,Y$ be independent uniform random elements of $C$.  Then
\[
        \bbE \left[ K(X,Y) \right] \le \frac2M.
\]
\end{lemma}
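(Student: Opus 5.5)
Split $\bbE[K(X,Y)]$ into its diagonal and off-diagonal parts, and bound the off-diagonal part using \cref{lem:boxes}. Since $\kappa(u,u)=1$ for every $u\in S^2$, we have $K(x,x)=1$. The diagonal terms $X=Y$ therefore contribute exactly $\frac1{M^2}\cdot M=\frac1M$. This leaves us to show
\[
        \frac1{M^2}\sum_{x\in C}\sum_{\substack{y\in C\\ y\ne x}}K(x,y)\le \frac1M,
\]
which will follow once we prove that, for every fixed $x\in C$,
\[
        \sum_{\substack{y\in C\\y\ne x}}K(x,y)\le 1 .
\]

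The key step is a pointwise comparison between $K$ and the quantity in \cref{lem:boxes}. For each coordinate $i$, write $\kappa(x_i,y_i)=\frac{1+\ip{x_i}{y_i}^2}{2}$.
\begin{itemize}
\item If $x_i\perp y_i$, then $\ip{x_i}{y_i}=0$ and $\kappa(x_i,y_i)=\frac12$.
\item Otherwise we only use the trivial bound $\ip{x_i}{y_i}^2\le 1$, which gives $\kappa(x_i,y_i)\le 1$.
\end{itemize}
Multiplying over coordinates gives
\[
        K(x,y)\le 2^{-d(x,y)}
\]
for all $x,y$. Summing over $y\ne x$ and applying \cref{lem:boxes} yields $\sum_{y\ne x}K(x,y)\le 1$. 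Averaging over $x$ then bounds the off-diagonal part by $\frac{M}{M^2}=\frac1M$. Adding the diagonal contribution gives $\bbE[K(X,Y)]\le \frac2M$.

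No step here seems delicate, since the real work was done in \cref{lem:boxes}. The only points to check are the bookkeeping of the normalization $\frac1{M^2}$ and that the diagonal contributes exactly $\frac1M$ rather than more. One should also note that the crude bound $\kappa\le1$ on non-orthogonal coordinates is exactly what makes the comparison to $2^{-d(x,y)}$ work, so nothing finer is needed.
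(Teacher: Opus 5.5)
Your proposal is correct and follows essentially the same route as the paper. You compare $K(x,y)\le 2^{-d(x,y)}$ coordinatewise, use $K(x,x)=1$ for the diagonal, apply \cref{lem:boxes} to the off-diagonal sum for each fixed $x$, and then average over $x$.
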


\begin{proof}
If $x_i\perp y_i$, then $\kappa(x_i,y_i)=1/2$, while in every coordinate
$\kappa(x_i,y_i)\le 1$.  Therefore
\[
        K(x,y)\le 2^{-d(x,y)}.        
\]
Since $K(x,x)=1$, \cref{lem:boxes} gives, for every $x\in C$,
\[
        \sum_{y\in C}K(x,y)
        \le
        1+\sum_{\substack{y\in C\\y\ne x}}2^{-d(x,y)}
        \le 2.
\]
Summing over $x\in C$ gives
\[
        \sum_{x,y\in C}K(x,y)\le 2M.
\]
Since $X,Y$ are independent and uniform on $C$,
\[
        \bbE \left[ K(X,Y) \right]
        =
        \frac1{M^2}\sum_{x,y\in C}K(x,y)
        \le
        \frac{2M}{M^2}
        =
        \frac2M.
\]
\end{proof}

Combining \cref{lem:kernel-lower,lem:kernel-upper}, we obtain the basic finite
inequality
\[
        \frac{2}{M}
        \ge
        2^{-n}
        \sum_{s=0}^n \binom ns \max\{3^{-s},M^{-1}\}.       \tag{1}\label{eq:basic}
\]

\section{Extracting the asymptotics}
\label{sec:asymptotics}

We now extract the final bound from the energy inequality.  Set
\[
B_n:=\left(\frac32\right)^n .
\]
We shall use the following elementary binomial estimate.

\begin{lemma}
	\label{lem:binomial-extraction}
	Let \(M=M_n\) satisfy
	\[
	B_n < M \le 2B_n .
	\]
	Then
	\[
	2^{-n}\sum_{s=0}^n \binom ns
	\max\{3^{-s},M^{-1}\}
	\ge
	(1-o(1))B_n^{-1}+(1-o(1))M^{-1}.
	\]
\end{lemma}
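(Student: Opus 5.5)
We split the sum according to which of the two terms realizes the maximum. Write $s^*$ for the threshold where $3^{-s}=M^{-1}$, i.e. $s^*=\log_3 M$. Since $B_n<M\le 2B_n$, we have $s^*=n\log_3(3/2)+O(1)=\alpha n+O(1)$ with $\alpha=\log_3(3/2)\approx 0.369$. The point is that the binomial distribution $\mathrm{Bin}(n,1/2)$ and the tilted distribution $\mathrm{Bin}(n,1/4)$ concentrate on opposite sides of $s^*$, so each term of the maximum can be summed almost in full on its own side.

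\textbf{Lower bound by splitting.} We use
\[
\max\{3^{-s},M^{-1}\}\ge 3^{-s}\mathbf 1[s\le s^*]+M^{-1}\mathbf 1[s>s^*].
\]
This gives
\[
2^{-n}\sum_s\binom ns\max\{3^{-s},M^{-1}\}\ge 2^{-n}\sum_{s\le s^*}\binom ns 3^{-s}+M^{-1}\,2^{-n}\sum_{s>s^*}\binom ns.
\]

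\textbf{First term.} We have $2^{-n}\sum_{s}\binom ns 3^{-s}=2^{-n}(4/3)^n=(2/3)^n=B_n^{-1}$. Moreover $\binom ns 3^{-s}/(4/3)^n$ is the probability that a $\mathrm{Bin}(n,1/4)$ variable equals $s$. Hence the first term equals $B_n^{-1}\Pr[\mathrm{Bin}(n,1/4)\le s^*]$. Since $1/4<\alpha$, Chebyshev (or Chernoff) gives this probability $\ge 1-o(1)$, because $s^*-n/4\ge(\alpha-1/4)n-O(1)\to\infty$ on the scale $\sqrt n$.

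\textbf{Second term.} Similarly, $2^{-n}\sum_{s>s^*}\binom ns=\Pr[\mathrm{Bin}(n,1/2)>s^*]\ge 1-o(1)$, since $\alpha<1/2$.

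Adding the two terms gives exactly $(1-o(1))B_n^{-1}+(1-o(1))M^{-1}$, with $o(1)$ uniform in $M$ in the given range.

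\textbf{Where care is needed.} The main step needing care is checking that $s^*$ sits strictly between the two means $n/4$ and $n/2$, with a margin linear in $n$, uniformly in $M\in(B_n,2B_n]$. This follows from $\log_3 M=\alpha n+\theta$ with $\theta\in(0,\log_3 2]$ and $1/4<\alpha<1/2$, after which the concentration estimates are routine.
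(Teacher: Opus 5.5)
Your proof is correct and follows the paper's argument essentially verbatim. The paper splits at $t=\lfloor\log_3 M\rfloor$, which is equivalent to your $s\le s^*$ cut since $s$ is an integer. It then identifies the two pieces as $B_n^{-1}\Pr[\operatorname{Bin}(n,1/4)\le t]$ and $M^{-1}\Pr[\operatorname{Bin}(n,1/2)>t]$, and uses $1/4<\log_3(3/2)<1/2$ to conclude that both probabilities tend to $1$. Your explicit remark that the margin is linear in $n$ and uniform over $M\in(B_n,2B_n]$ is a welcome bit of extra care.
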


\begin{proof}
	Let
	\[
	\rho:=\log_3\left(\frac32\right),
	\]
	so that
	\[
	\frac14<\rho<\frac12.
	\]
	Put
	\[
	t:=\lfloor \log_3 M\rfloor .
	\]
	Since \(B_n<M\le 2B_n\), we have
	\[
	\frac tn=\rho+o(1).
	\]
	
	For \(s\le t\), we use the lower bound
	\[
	\max\{3^{-s},M^{-1}\}\ge 3^{-s}.
	\]
	Therefore
	\[
	\begin{aligned}
		2^{-n}\sum_{s\le t}\binom ns3^{-s}
		&=
		B_n^{-1}\,
		\mathbb P\!\left[\operatorname{Bin}\left(n,\frac14\right)\le t\right].
	\end{aligned}
	\]
	Since \(t/n\to\rho>1/4\), the probability on the right tends to \(1\). Hence
	\[
	2^{-n}\sum_{s\le t}\binom ns3^{-s}
	=
	(1-o(1))B_n^{-1}.
	\]
	
	For \(s>t\), we have \(3^{-s}<M^{-1}\), and so
	\[
	\max\{3^{-s},M^{-1}\}=M^{-1}.
	\]
	Thus
	\[
	\begin{aligned}
		2^{-n}\sum_{s>t}\binom ns M^{-1}
		&=
		M^{-1}\,
		\mathbb P\!\left[\operatorname{Bin}\left(n,\frac12\right)>t\right].
	\end{aligned}
	\]
	Since \(t/n\to\rho<1/2\), this probability also tends to \(1\). Therefore
	\[
	2^{-n}\sum_{s>t}\binom ns M^{-1}
	=
	(1-o(1))M^{-1}.
	\]
	Adding the two estimates proves the lemma.
\end{proof}

We now finish the proof of the theorem.  Let
\[
M:=|C|.
\]
The energy inequality gives
\[
\frac2M
\ge
2^{-n}\sum_{s=0}^n\binom ns
\max\{3^{-s},M^{-1}\}.
\]
Using only the terms \(3^{-s}\), we first get
\[
\frac2M
\ge
2^{-n}\sum_{s=0}^n\binom ns3^{-s}
=
2^{-n}\left(1+\frac13\right)^n
=
B_n^{-1}.
\]
Hence
\[
M\le 2B_n.
\]

If \(M\le B_n\), there is nothing to prove.  Otherwise \(B_n<M\le 2B_n\), and
\cref{lem:binomial-extraction} applies.  Therefore
\[
\frac2M
\ge
(1-o(1))B_n^{-1}+(1-o(1))M^{-1}.
\]
Multiplying by \(M\), we obtain
\[
2
\ge
(1-o(1))\frac{M}{B_n}+1-o(1).
\]
Thus
\[
\frac{M}{B_n}\le 1+o(1),
\]
or equivalently
\[
|C|=M\le (1+o(1))\left(\frac32\right)^n .
\]

\section{On the leading constant}

It is useful to spell out why the argument reaches exactly the constant $1$.
Write
\[
        M=c\left(\frac32\right)^n.
\]
In the critical range, the right-hand side of \eqref{eq:basic} is
\[
        \left(1+\frac1c+o(1)\right)\left(\frac23\right)^n,
\]
where the first term comes from the contribution $3^{-s}$ and the second from
the contribution $M^{-1}$ above the threshold $s\approx \log_3 M$.  The
left-hand side of \eqref{eq:basic} is
\[
        \frac{2}{c}\left(\frac23\right)^n.
\]
Thus the leading-order inequality is
\[
        \frac{2}{c}\ge 1+\frac1c,
\]
which is precisely $c\le 1$.

By contrast, the direct-sum estimate of \cite{BhandariKhetanVector2025} gives
an upper bound on the corresponding pair average of size essentially
$2\cdot 3^n/M^2$ rather than the centered estimate used here.  Substituting
$M=c(3/2)^n$ then leads to
\[
        1+\frac1c\le \frac{2}{c^2}+\frac1c,
\]
and hence only $c\le \sqrt 2$.  The improvement in the present note is
therefore not an optimization of the cutoff in the binomial sum; it is the new
centered packing inequality of \cref{lem:boxes}.

The method presented here does not suggest a way to improve the leading
constant below $1$.  Such an improvement would require information beyond the
pointwise centered packing and the Welch-type lower bound used above.

\section*{Declaration of Generative AI} During the preparation of this work, the author used ChatGPT~5.6 to improve the language, readability, and clarity of the exposition. After using this tool, the author carefully reviewed and edited the manuscript as needed. The author takes full responsibility for the content of the article.

\end{document}